\newcommand{\double}       {}
\newtheorem{theorem}{Theorem}[section]
\newtheorem{lemma}{Lemma}[section]
\newcommand{\cb}          {\begin{tabbing}MMMMM\=MM\=MM\=MM\=MM\=MM\=MM\=MM\=MM\=MM\= \kill}
\newcommand{\ce}          {\end{tabbing}}
\def\eor{\\\hfill(end of remark)}
\newcommand{\separate}     {\vspace{0.3cm}\begin{center}*~~~~~~~~~~*~~~~~~~~~~*\end{center}\vspace{0.3cm}}
\newcommand{\dqt}[1]        {"{#1}"}
\def\emph{\textsl}
\def\em{\sl}
\def\dfeq{\stackrel{\triangle}{=}}
\newcommand{\bra}[1]     {\langle{#1}|}
\newcommand{\ket}[1]     {|{#1}\rangle}
\newcommand{\bracket}[2] {\langle{#1}|{#2}\rangle}
\newcommand{\mtx}[1]     {\mathbf{{#1}}}
\newcounter{myremark}
\newcounter{myexample}
\def\remark{
\refstepcounter{myremark}%
\noindent \emph{Remark \arabic{myremark}:}
}
\begin{document}

\title{Network navigation using Page Rank random walks}
 \numberofauthors{2}
 \author{    
   \alignauthor Emilio Aced Fuentes\\
   \affaddr{Escuela Polit\'ecnica Superior}\\
   \affaddr{Universidad Aut\'onoma de Madrid}\\
   \email{emilio.aced@estudiante.uam.es}
   \alignauthor Simone Santini\titlenote{Simone Santini was supported in part by the 
       Spanish \emph{Ministerio de Ciencia e Innovaci\'on} under the grant N. PID2019-108965GB-I00 \emph{M\'as all\'a
       de la recomendaci\'on est\'atica: equidad, interacci\'on y transparencia}}\\
   \affaddr{Escuela Polit\'ecnica Superior}\\
   \affaddr{Universidad Aut\'onoma de Madrid}\\
   \email{simone.santini@uam.es}
 }

\maketitle

\double

\begin{abstract}
    We introduce a formalism based on a continuous time approximation, to study the characteristics 
    of Page Rank random walks. We find that the diffusion of the occupancy probability has a dynamics
    that exponentially \dqt{forgets} the initial conditions and settles to a steady state that depends
    only on the characteristics of the network. In the special case in which the walk begins from a single 
    node, we find that the largest eigenvalue of the transition value ($\lambda_1=1$) does not contribute 
    to the dynamic and that the probability is constant in the direction of the corresponding eigenvector.
    We study the process of visiting new node, which we find to have a dynamic similar to that of the
    occupancy probability. Finally, we determine the average transit time between nodes $\langle{T}\rangle$,
    which we find to exhibit certain connection with the corresponding time for L\'evy walks.
    The relevance of these results reside in that Page Rank, which are a more reasonable model for
    the searching behavior of individuals, can be shown to exhibit features similar to L\'evy walks,
    which i nturns have been shown to be a reasonable model of a common large scale search strategy
    known as \emph{Area Restricted Search}.
\end{abstract}

\section{Introduction}
The study of stochastic processes in graphs has a fairly long history
in the analysis of transport and diffusion processes, from the study
of epidemics \cite{barbour:90,britton:07} and technical systems
\cite{vespignani:12} to animal \cite{ramos:04,boyer:06} and human
\cite{brockmann:06,gonzalez:08} movements or the analysis of social
interactions \cite{iribarren:09}. With the rise in popularity of the
Internet around the turn of the century and of social networks a
decade later, there has been a corresponding surge of interest in
the study of random walks on graphs, this time seeing them as models
of search strategies: a possible and common search strategy is, in
fact, jumping from neighbor to neighbor until the information is found
that one is after. In a standard random walks on a
graph, one moves at each time step from a node to one of its neighbors
picked at random  \cite{lovasz:96}. These walks are a classic whose study precedes the
applications to search and their characteristics are well
understood. However, the connection between random walks and search
has generated new interest in other kinds of walks.

In the continuum, for example, it has been observed that the so-called
\emph{L\'evy Walks} (walks in which one makes jumps at a distance $d$
from the current location with probability $p(d)\sim{d^{-\alpha}}$
\cite{ali:18,borgers:20}) exhibit diffusion characteristics similar to
those of a common animal behavior known as \emph{Area Restricted
  Search (ARS)} \cite{riascos:14,santini:20a}. The clearest example of ARS is foraging: an
animal will move around with small movements, staying essentially in
the same patch as long as there is a lot of food to be found then, as
the food becomes scarce, will do a long migration to find a new
patch. This behavior, which in animals is controlled by dopamine, is
at the base of many problem-solving behaviors in virtually all
animals%
\footnote{More precisely: the behavior is found in all
  \emph{eumetazoa} that is, in all animals except \emph{porifera}, a
  class that contains sponges and little more.}%
, from the nematode \emph{C.elegans} hunting for food, to the saccadic
eye movements of a person looking at an image. The reason for
this universality (apart from the early appearance of dopamine control
in evolution) is that ARS is optimal for \dqt{patchy} resources in
which the location and characteristics of the patches are not known in
advance \cite{santini:20a}.

In the continuum, L\'evy walks have a dynamic behavior that mimics
that of ARS walks, and this has led to their study in connection with
search problems \cite{santini:20a}. The analysis has been extended to
L\'evy walks on graphs, in which the walker jumps from a node to
another at a distance $d$ with probability $p(d)\sim{d^{-\alpha}}$, as
in the continuous case, but where now $d$ is the length of the
shortest path between two nodes \cite{riascos:12}. This study has led
to useful insight into the behavior of these walks on graphs.

One issue that one might rise in connection with this work is whether
a L\'evy walk is a suitable model of search behavior. When one is
searching the web, for example, an with the exception of the immediate
neighbors ($d=1$), one has no idea what pages are at a shortest-path
distance $d$ from the one they are looking at. In a social network, one
might conceivably jump directly at the site of a friend of a friend
($d=2$) but, beyond that, it is virtually impossible to decide to
execute a jump at a distance $d$.

One, more reasonable, model seems to be the following. A person keeps
looking for information moving from one page to another but at any
given time, with probability $u$, she invokes a second mechanism
(e.g.\ a search engine) that leads to a new page that has no
link-structural relation with the one she was just visiting. That is,
from the point of view of a random walk, the behavior is the
following: at any moment, with probability $1-u$ move to one of your
neighbors chosen at random; with probability $u$, jump to a random
node in the graph. This type of walk is known as the \emph{Page Rank}
random walk%
\footnote{The name derives from the algorithm developed in \cite{page:99} to
  assign \dqt{structural relevance} to web sites and originally used
  in the \emph{google} search engine: the relevance of a site is the
  probability of being in that site in the stationary solution of the
  Page Rank random walk.}%
~\cite{brin:98,page:99}. 

In this paper, we analyze the Page Rank random walk and derive some
of its features, most notably, the dynamics that leads to the
stationary solution, the number of new nodes visited as a function of
time, and the average time of the walk between two random nodes of the
graph.

\section{Basic Equations}
Let $G=(V,E)$ be an undirected graph with $n$ nodes and $m$ edges
($V=\{1,\ldots,n\}$, $E\subseteq{V}\times{V}$, $|E|=2m$), with
adjacency matrix $\mtx{A}$ ($a_{ij}=1$ if $(i,j)\in{E}$, $a_{ij}=0$
otherwise), where $\mtx{A}=\mtx{A}'$ and $a_{ii}=0$ (we do not allow
self-loops). Let $d_i=\sum_ja_{ij}$ be the degree of node $i$, that
is, the number of its neighbors. Also, define the matrix
$\mtx{D}=\mbox{diag}(d_1,\ldots,d_n)$.

In a random walk, we start at time $t=0$ from a node $v_0\in{V}$ and,
if at time $t$ we are in node $v_t$, at time $t+1$ we move to one of
its neighbors with probability $1/d_{v_t}$. Define the matrix
$\mtx{W}=\mtx{D}^{-1}\mtx{A}$ with elements
\begin{equation}
  w_{ij} = \frac{a_{ij}}{d_i} = 
    \begin{cases}
      \frac{1}{d_i} & \mbox{if $(i,j)\in{E}$} \\
      0             & \mbox{otherwise}
    \end{cases}
\end{equation}
Then $w_{ij}$ is the probability of moving from node $i$ to node $j$
in a single step. Let $p_i(t)={\mathbb{P}}[v_t=i]$ be the probability
of being in node $i$ at time $t$. The evolution of this probability is
given by the \emph{Master Equation}
\begin{equation}
  p_i(t+1) = \sum_{j:(i,j)\in{E}} \frac{1}{d_j} p_j(t) = 
  \sum_{j=1}^n \frac{a_{ij}}{d_j} p_j(t) = 
  \sum_{j=1}^n w_{ji} p_j(t) 
\end{equation}
If we collect the probabilities in a vector
$\ket{p}(t)=[p_1(t),\ldots,p_n(t)]'$ then the master equation can be
rewritten as
\begin{equation}
  \ket{p}(t+1) = \mtx{W}'\ket{p}(t)
\end{equation}
That is, if $\ket{p}^0$ is the initial probability distribution of the
walker
\begin{equation}
  \ket{p}(t) = (\mtx{W}')^t \ket{p^0}
\end{equation}
It is easy to check that the probability distribution $\ket{\pi}$, with
\begin{equation}
  \pi_i = \frac{d_i}{\sum_k d_k} = \frac{d_1}{2m}
\end{equation}
satisfies 
\begin{equation}
  \label{pieigen}
  \ket{\pi}=\mtx{W}'\ket{\pi}
\end{equation}
and is therefore a stationary distribution of the walk. It is also
possible to show that this distribution is unique \cite{lovasz:96}. From
(\ref{pieigen}) it follows that $\ket{\pi}$ is an eigenvector of
$\mtx{W}'$ corresponding to $\lambda_1=1$ (or, equivalently, a right
eigenvector of $\mtx{W}$: $\bra{\pi}=\bra{\pi}\mtx{W}$) and the
unicity of the stationary distribution implies that the eigenvalue
$\lambda_1=1$ has multiplicity $1$. Since the eigenvector $\ket{\pi}$
has all positive components, it follows from the Frobenius-Perron
theorem \cite{biyikoglu:07} that
\begin{equation}
  1 = \lambda_1 > \lambda_2 \ge \lambda_3 \ge \cdots \ge \lambda_n \ge -1
\end{equation}

We show in Appendix \ref{bipartite} that $\lambda_n=-1$ only for
bipartite graphs, a case that we do not consider, so we shall always
assume $\lambda_n>-1$.  If we are at a node $i$, the probability of
moving to one of its neighbor is one. Consequently
\begin{equation}
  \label{coo}
  \sum_j w_{ij} = \sum_j \frac{a_{ij}}{d_i} = \frac{1}{d_i} \sum_{(i,j)\in{E}} 1 = 1 
\end{equation}
If $\ket{1_n}=[1,\ldots,1]'$, it is easy to see from (\ref{coo}) that
$\mtx{W}\ket{1_n}=\ket{1_n}$, or
\begin{equation}
  \label{pip}
  \bra{1_n} = \bra{1_n}\mtx{W}'
\end{equation}
That is, the constant vector $\ket{1_n}$ is the right
eigenvector of $\mtx{W}'$ corresponding to $\lambda_1=1$.

The left eigenvectors other than the first correspond to eigenvalues
with $|\lambda|<1$. These have a property that will be quite relevant
in our context:

\begin{lemma}
  \label{sum0}
  Let $\ket{v}$ be a left eigenvector of $\mtx{W}'$ corresponding to an
  eigenvalue $\lambda\ne{1}$. Then
  \begin{equation}
    \label{bingo}
    \sum_i v_i = 0
  \end{equation}

\end{lemma}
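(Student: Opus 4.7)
The plan is to exploit equation (\ref{pip}), $\bra{1_n}=\bra{1_n}\mtx{W}'$, which is simply a restatement of the row-stochasticity $\sum_j w_{ij}=1$ from (\ref{coo}). This identity exhibits $\bra{1_n}$ as a left-eigenvector of $\mtx{W}'$ with eigenvalue $1$. Together with the hypothesis on $\ket{v}$, which in matrix form reads $\mtx{W}'\ket{v}=\lambda\ket{v}$ with $\lambda\ne 1$, the standard pairing argument between left and right eigenvectors belonging to distinct eigenvalues will force $\bra{1_n}\ket{v}=0$; since this inner product is literally $\sum_i v_i$, that is exactly the desired identity (\ref{bingo}).

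Concretely, I would evaluate the scalar $\bra{1_n}\mtx{W}'\ket{v}$ in two different ways. Grouping to the right and invoking the eigenvector equation gives
\[
  \bra{1_n}\bigl(\mtx{W}'\ket{v}\bigr) = \lambda\,\bra{1_n}\ket{v} = \lambda\sum_i v_i.
\]
Grouping to the left and invoking (\ref{pip}) instead gives $\bigl(\bra{1_n}\mtx{W}'\bigr)\ket{v}=\bra{1_n}\ket{v}=\sum_i v_i$. Equating the two expressions produces $(\lambda-1)\sum_i v_i=0$, and the hypothesis $\lambda\ne 1$ then forces (\ref{bingo}).

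There is no real obstacle to carrying this out: every ingredient — the eigenvector equation for $\ket{v}$ and the stochasticity identity $\bra{1_n}\mtx{W}'=\bra{1_n}$ — is already in place, and the proof is essentially a two-line pairing computation. The only subtlety worth noting is that the invariant picked out by the eigenvalue $\lambda=1$ is the \emph{unweighted} sum $\sum_i v_i$ rather than some degree-weighted variant, a fact that traces back directly to the row-stochasticity of $\mtx{W}$ through (\ref{pip}); had we instead paired against $\bra{\pi}$, we would have obtained $\sum_i d_i v_i=0$.
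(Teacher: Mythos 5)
Your main argument is correct and is essentially the paper's own proof written in matrix notation: the paper sums the component form of the eigenvector equation, $\sum_j w_{ji}v_j=\lambda v_i$, over $i$ and invokes $\sum_i w_{ji}=1$ from (\ref{coo}), which is exactly your pairing of $\mtx{W}'\ket{v}=\lambda\ket{v}$ against $\bra{1_n}$ via (\ref{pip}). So on the lemma itself there is nothing to object to.

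One caution about your closing remark, though: pairing against $\bra{\pi}$ does \emph{not} yield $\sum_i d_i v_i=0$. The stationary vector satisfies $\mtx{W}'\ket{\pi}=\ket{\pi}$, i.e.\ it is a column eigenvector of $\mtx{W}'$ (equivalently $\bra{\pi}\mtx{W}=\bra{\pi}$), not a row vector fixed under right multiplication by $\mtx{W}'$; in general $\bra{\pi}\mtx{W}'\ne\bra{\pi}$, so the left-grouping step of your two-way evaluation breaks down. Orthogonality is only forced between left and right eigenvectors belonging to \emph{different} eigenvalues, and here $\ket{v}$ and $\ket{\pi}$ are eigenvectors of the same matrix on the same side. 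Indeed the claimed identity is false for non-regular graphs: since $\mtx{W}'$ is diagonalizable and $\lambda_1=1$ is simple, there are $n-1$ independent eigenvectors with $\lambda\ne 1$, and if all of them were orthogonal to both $\ket{1_n}$ and $\mtx{D}\ket{1_n}$ (two independent vectors when the degrees are not all equal), they would lie in a subspace of dimension $n-2$, a contradiction. The identity $\sum_i d_i v_i=0$ holds only in the regular case, where it collapses back to the lemma itself.
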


\begin{proof}
  The eigenvector equation gives
  \begin{equation}
    \sum_j w_{ji}v_j = \lambda v_i
  \end{equation}
  Summing the components we have
  \begin{equation}
    \lambda \sum_i v_i = \sum_i \sum_j w_{ji}v_j = \sum_j \underbrace{\sum_i w_{ji}}_{1} v_j = \sum_j v_j
  \end{equation}
  where the sum over $j$ in the third expression is equal to 1
  because of (\ref{coo}). The equation
  \begin{equation}
    \lambda \sum_i v_i = \sum_j v_j
  \end{equation}
  with $\lambda\ne{1}$ has (\ref{bingo}) as the only solution
\end{proof}
%
%

%
%
The
matrix $\mtx{W}'$ can be diagonalized as
\begin{equation}
  \mtx{W}' = \mtx{T}\mtx{\Lambda}\mtx{T}^{-1}
\end{equation}
Let $\mtx{T}$ and $\mtx{T}^{-1}$ be defined as 
\begin{equation}
  \label{tdef}
  \mtx{T} = \bigl[\ket{\phi_1} | \cdots | \ket{\phi_n} \bigr]
\end{equation}
and 
\begin{equation}
  \label{t1def}
  \mtx{T}^{-1} = \left[
    \begin{array}{c}
      \bra{\psi_1} \\
      -- \\
      \vdots \\
      -- \\
      \bra{\psi_n}
    \end{array}
    \right]
\end{equation}
Then $\ket{\phi_i}$ are eigenvectors and $\bra{\psi_i}$ 
right eigenvectors of $\mtx{W}'$, that is
\begin{equation}
  \begin{aligned}
    \mtx{W}'\ket{\phi_i} &= \lambda_i\ket{\phi_i} \\
    \bra{\psi_i} \mtx{W}' &= \lambda_i \bra{\psi_i}
  \end{aligned}
\end{equation}

Let $t_{ij}$ be the $i,j$ element of $\mtx{T}$ and $\tau_{ij}$ that of
$\mtx{T}^{-1}$, then
$t_{ij}=\bracket{e_i}{\phi_j}=\phi_{j,i}$ (the $i$th component of $\ket{\phi_j}$),
and
$\tau_{ij}=\bracket{\psi_i}{e_j}=\psi_{i,j}$ (the $j$th component of $\bra{\psi_i}$).
We have
\begin{equation}
  \label{forgot}
  \delta_{ij} = \bigl(\mtx{T}^{-1}\mtx{T})_{ij} = 
  \sum_k \tau_{ik}t_{kj} = \sum_k \psi_{i,k} \phi_{j,k} = \bracket{\psi_i}{\phi_j}
\end{equation}
and
\begin{equation}
  \label{deltaeq}
  \delta_{ij} = \bigl(\mtx{T}\mtx{T}^{-1})_{ij} = 
  \sum_k t_{ik}\tau_{kj} = \sum_k \phi_{i,k} \psi_{j,k} 
\end{equation}

The matrix $\mtx{W}'$ is that of the standard random walk, and its
only equilibrium point is the probability vector $\ket{\pi}$. This
vector satisfies the equation $\mtx{W}'\ket{\pi}=\ket{\pi}$, that is,
$\pi$ is an eigenvector relative to $\lambda=1$ and therefore
$\phi_1\sim\pi$, or
\begin{equation}
  \phi_{1,i} = b d_i
\end{equation}
with $b>0$. 
From (\ref{pip}) we have
\begin{equation}
  \bra{1_n} \mtx{W}' = \bra{1_n}
\end{equation}
that is, 
\begin{equation}
  \bra{\psi_1} = a \bra{1_n}
\end{equation}
with $a>0$. One of the constants $a$ and $b$ can be determined by
condition (\ref{forgot}) (the other one is arbitrary):
\begin{equation}
  1 = \bracket{\psi_1}{\phi} = a\cdot b\cdot \sum_i d_i
\end{equation}
We can choose $b=1/\sum{d_i}$, leading to
\begin{equation}
  \phi_{1,i} = \frac{d_i}{\sum_k d_k} = \pi_i
\end{equation}
and $a=1$, that is $\bra{\psi_1}=\bra{1_n}$. Note that
\begin{equation}
  \label{sums}
  \begin{aligned}
    \sum_k \phi_{1,k} &= 1 \\
    \sum_k \psi_{1,k} &= n
  \end{aligned}
\end{equation}

\section{Page Rank walk}
In the Page Rank random walk, at each step we toss a coin
with probability $u$ of giving heads. If the result is tail, we pick a
random neighbor and move to it. If the result is head, we jump to a
random node in the graph. That is, if $r_{ij}$ is the probability of
jumping from a node $i$ to a node $j$, we have
\begin{equation}
  r_{ij} = (1-u)\frac{a_{ij}}{d_i} + u \frac{1}{n} = (1-u)w_{ij} + \frac{u}{n}
\end{equation}
The master equation for the page rank walk follows the general schema
of the standard walk, and is given by
\begin{equation}
  \begin{aligned}
    p_i(t+1) &= \sum_j p_j(t) \omega_{ji} \\
       &= \sum_j (1-u) w_{ji} p_j(t) + \frac{u}{n} \sum_j p_j(t) \\
       &= \sum_j (1-u) w_{ji} p_j(t) + \frac{u}{n}
  \end{aligned}
\end{equation}
Defining the vector 
\begin{equation}
  \ket{p}(t) \dfeq [p_1(t),\ldots,p_n(t)]'
\end{equation}
That is
\begin{equation}
  \label{master}
  \ket{p}(t+1) = (1-u)\mtx{W}'\ket{p}(t) + \frac{u}{n} \ket{1_n}
\end{equation}
The stationary point of this iteration (occupancy probability at
steady state) is given by
\begin{equation}
  \label{looksie}
  \ket{p^*} = (1-u)\mtx{W}'\ket{p^*} + \frac{u}{n}\ket{1_n}
\end{equation}
Since $1\ge\lambda_i>-1$ and $u>0$, the matrix $\mtx{I}-(1-u)\mtx{W}$
has all the eigenvalues strictly positive and therefore it is non
singular. Eq. (\ref{looksie}) can then be solved:
\begin{equation}
  \ket{p^*} = \bigl[\mtx{I} - (1-u)\mtx{W}']^{-1}\frac{u}{n}\ket{1_n}
\end{equation}
Equation (\ref{master}) can be rewritten as 
\begin{equation}
  \label{difference}
  \begin{aligned}
      \ket{p}(t+1) -\ket{p}(t) &= -\bigl[\mtx{I} - (1-u)\mtx{W}']\ket{p}(t) + \frac{u}{n} \ket{1_n} \\
      &\dfeq -\mtx{Q}\ket{p}(t) + \frac{u}{n} \ket{1_n}
    \end{aligned} 
\end{equation}
with 
\begin{equation}
  \mtx{Q} = \bigl[\mtx{I} - (1-u)\mtx{W}']
\end{equation}
and the equilibrium can be rewritten as
\begin{equation}
  \ket{p}^* = \mtx{Q}^{-1}\frac{u}{n}\ket{1_n}
\end{equation}

\section{Dynamics of the Page Rank Walk}
The matrix $\mtx{Q} = \bigl[\mtx{I} - (1-u)\mtx{W}']$ has the same
eigenvectors as $\mtx{W}'$ and eigenvalues
\begin{equation}
  \mu_i = 1 - (1-u)\lambda_i
\end{equation}
with 
\begin{equation}
  u = \mu_1 < \mu_2 \le \dots \le \mu_n = 1 - (1-u)\lambda_n < 2 - u
\end{equation}
If $u=0$ the matrix $\mtx{Q}$ is singular. We shall not
consider this case, which reduces to the standard random
walk. If $u>0$, $\mtx{Q}$ can be decomposed as
\begin{equation}
  \label{bar}
  \begin{aligned}
    \mtx{Q} &= \mtx{T}\mtx{L}\mtx{T}' \\
    \mtx{L} &= \mbox{diag}(\mu_1,\ldots,\mu_n) \\
  \end{aligned}
\end{equation}
Note that we place the eigenvalues in an usual order: normally the
first eigenvalue is the largest while in out case $\mu_1$ is the
smaller. We do this to maintain a simpler correspondence between
$\mu_i$ and $\lambda_i$.

\bigskip

\remark 
Before we continue, we shall do a \dqt{sanity check} to verify that
indeed $p^*$ is a probability vector, that is, that $\sum_ip_i^*=1$
At equilibrium, we have
\begin{equation}
  \ket{p}^* = \mtx{Q}^{-1}\frac{u}{n}\ket{1_n} = \frac{u}{n}\mtx{T}\mtx{L}^{-1}\mtx{T}^{-1}\ket{1_n}
\end{equation}
From the shape of $\mtx{T}^{-1}$, we have
\begin{equation}
  \left.\mtx{T}^{-1}\ket{1_n}\right|_i = \sum_k \psi_{i,k}
\end{equation}
that is
\begin{equation}
  \mtx{T}^{-1}\ket{1_n} = \sum_h \sum_k \psi_{i,k} \ket{e_h}
\end{equation}
From this it follows
\begin{equation}
  \mtx{L}^{-1}\mtx{T}^{-1}\ket{1_n} = \sum_h \frac{1}{\mu_h}\sum_k \psi_{h,k} \ket{e_h}
\end{equation}
\begin{equation}
  \mtx{T}\mtx{L}^{-1}\mtx{T}^{-1}\ket{1_n} = \sum_h \frac{1}{\mu_h}\sum_k \psi_{h,k} \ket{\phi_h}
\end{equation}
That is
\begin{equation}
  \ket{p^*} = \frac{u}{n} \sum_h \frac{1}{\mu_h}\sum_k \psi_{h,k} \ket{\phi_h}
\end{equation}
and
\begin{equation}
  \sum_i p_i^* = \frac{u}{n} \sum_h \frac{1}{\mu_h}\sum_k \psi_{h,k} \sum_i \phi_{h,i}
\end{equation}
Because of lemma \ref{sum0}, the only term remaining is that with $h=1$ for which $\mu_1=u$,
therefore we have
\begin{equation}
  \sum_i p_i^* = \frac{1}{n} \sum_k \psi_{1,k} \sum_i \phi_{1,i} 
  \stackrel{\dagger}{=}  \frac{u}{n} \frac{1}{u} n = 1
\end{equation}
where the equality $\dagger$ comes from (\ref{sums}).
\eor

\separate

Let us now go back to the iteration (\ref{difference}). If we iterate it $\Delta$ times, assuming that the right-hand side 
is kept constant, we have
\begin{equation}
  \ket{p}(t+\Delta) -\ket{p}(t) = \Delta\bigl[-\mtx{Q}\ket{p}(t) + \frac{u}{n} \ket{1_n}\bigr]
\end{equation}
If we now consider $t$ a continuous variable, divide by $\Delta$ and
take the limit for $\Delta\rightarrow{0}$ we obtain the continuous
approximation of the walk, that is
\begin{equation}
  \label{foo}
  \frac{d}{d t}\ket{p} = -\mtx{Q}\ket{p} + \frac{u}{n} \ket{1_n}
\end{equation}
This equation can be interpreted as a diffusion equation under the
spatial operator $\mtx{Q}$. We are interested in studying the time
evolution of the probability as it approaches the steady state.

The solution of (\ref{foo}) is 
\begin{equation}
  \label{solution}
  \ket{p}(t) = \frac{u}{n}\mtx{Q}^{-1}\ket{1}_n + e^{-\mtx{Q}t}\ket{C}
\end{equation}
where $\ket{C}$ is an arbitrary vector that depends on the initial
conditions. Using the initial probabilities $\ket{p^0}$, we have
\begin{equation}
  \begin{aligned}
    \ket{p}(t) &= {\displaystyle \frac{u}{n} \bigl(\mtx{I}-e^{-\mtx{Q}t}\bigr)\mtx{Q}^{-1}\ket{1_n} + e^{-\mtx{Q}t}\ket{p^0}} \\
    &= {\displaystyle \frac{u}{n} \mtx{T}\bigl(\mtx{I}-e^{-\mtx{L}t}\bigr)\mtx{L}^{-1}\mtx{T}^{-1}\ket{1_n} + \mtx{T}e^{-\mtx{L}t}\mtx{T}^{-1}\ket{p^0} }
  \end{aligned}
\end{equation}
In order to analyze more in detail the dynamics of the walk, we move
to the eigenvector basis. Define $\ket{\zeta}=\mtx{T}^{-1}\ket{p}$, 
$\ket{\zeta^0}=\mtx{T}^{-1}\ket{p^0}$, and 
\begin{equation}
  \label{ttdef}
  \mtx{T}^{-1}\ket{1_n} = \ket{b} = [b_1, \ldots, b_n]'\ \ \ b_i=\sum_k \psi_{i,k}
\end{equation}
With these we obtain
\begin{equation}
  \frac{d}{d t}\ket{\zeta} = -\mtx{L}\ket{\zeta} + \frac{u}{n} \ket{b}
\end{equation}
with solution
\begin{equation}
  \ket{\zeta}(t) = \frac{u}{n}\bigl(\mtx{I} - e^{-\mtx{L}t}\bigr)\mtx{L}^{-1}\ket{b} + e^{-\mtx{L}t}\ket{\zeta^0}
\end{equation}
The matrix that determines the dynamics are in this case diagonal,
therefore each direction in the eigenvector space evolves
independently. Consider the direction of the first eigenvector,
corresponding to $\mu_1=u$:
\begin{equation}
  \begin{aligned}
    \zeta_1 &= \frac{u}{n}\bigl(1 - e^{-u t}\bigr)\left.\mtx{L}^{-1}\ket{b}\right|_1 + e^{-u t}\zeta_1^0 \\
    &= \frac{u}{n}\bigl(1 - e^{-u t}\bigr) \frac{1}{u} \sum_k \psi_{1,k} + e^{-u t}\left.\mtx{T}^{-1}\ket{p^0}\right|_1 \\
    &= 1 - e^{-u t} + e^{-u t}\left.\mtx{T}^{-1}\ket{p^0}\right|_1 
  \end{aligned}
\end{equation}
If the walk begins at a specific node $m$, then $p_k^0=\delta_{k,m}$ that is
\begin{equation}
  \left.\mtx{T}^{-1}\ket{p^0}\right|_1 = \sum_k \psi_{1,k}\delta_{k,m} = 1
\end{equation}
leading to $\zeta_1=1$. The first eigenvalue $\mu_1$ does not
contribute to the dynamics of the probability distribution, which is
determined uniquely by $\mu_2,\ldots,\mu_n$, the only eigenvalues that
actually depend on the structure of the graph.

%
%

\section{Node Visitation}
In the previous section we studied the evolution of the occupancy
probability of the various nodes, en evolution that is controlled by the
values $\{\mu_i\}_{i\ge{2}}$.

We are now interested in a different aspect of the walk, namely how
fast we visit the nodes of the graph, and whether Page Rank allows us
to visit the nodes faster than a regular random walk. 

Let us consider a graph in which traveling a random edge leads us to a
node with an average of $\bar{q}$ neighbors%
\footnote{This is not the same as the average number of neighbors per
  node determined picking a node at random from the graph, a point
  that we shall consider later on.}%
, and let $m_t$ be the number of different nodes visited at time $t$.

If we consider a random node of the graph, the probability that the
node had already been visited is $m_t/n$, and the probability that the
node is a new discovery is $1-m_t/n$.

Suppose now we have just arrived at a node $v$ in the graph. There are
two ways in which this can happen: either we were at a neighbor $v'$
and from that we walked to one of the neighbors (this happened with
probability $1-u$), or we decide to jump to a random node of the graph
and with that jump we arrived at $v$ (this happened with probability
$u$). Once we are at $v$, we also have two alternatives as to what to
do: we can either move to a neighbor (with probability $1-u$) or jump
to a random node (with probability $u$).

Let us consider separately the two ways in which we may have arrived
at $v$ and the possible ways in which we can leave $v$.

\begin{description}
\item[Arrived from a neighbor (probability $1-u$):] the node $v$ has
  on average $\bar{q}$ neighbors and, among these, there is $v'$, the
  node we came from. At the node $v$ we toss a coin:
  \begin{description}
  \item[Move to a neighbor (probability $1-u$):] The average number of
    neighbors is $\bar{q}$; with probability $1/\bar{q}$ we shall go
    back to $v'$, and not visit any new node. With probability
    $1-1/\bar{q}$ we shall move to a node other than the neighbor we
    came from and in this case we have a probability $1-m_t/n$ of
    visiting a new neighbor. Considering the probability of choosing
    this option, we shall visit a new node with probability
    \begin{equation}
      (1-u)^2 \frac{\bar{q}-1}{\bar{q}}\Bigl(1 - \frac{m_t}{n} \Bigr)
    \end{equation}
  \item[Jump to a random node (probability $u$):] In this case, it
    doesn't really matter the fact that one of our neighbors has
    certainly be visited: we jump to a random node in the graph, so
    the probability of choosing this option and visiting a new node is
    \begin{equation}
      (1-u)u \Bigl(1 - \frac{m_t}{n} \Bigr)
    \end{equation}
  \end{description}
\item[Arrived from a jump (probability $u$):] In this case, it really
  doesn't matter how we decide to leave the node, either with another
  jump or through a neighbor: we are in a random area of the graph,
  our neighbors are random nodes. The probability of visiting a new
  node using this option is
    \begin{equation}
      u \Bigl(1 - \frac{m_t}{n} \Bigr)
    \end{equation}
\end{description}

Putting it all together, the probability that with a step of the walk
we visit a new neighbor is
\begin{equation}
  \label{putsch}
  \begin{aligned}
    &(1-u)\Bigl[(1-u)\frac{\bar{q}-1}{\bar{q}}\Bigl(1 - \frac{m_t}{n}\Bigr) + u\Bigl(1 - \frac{m_t}{n}\Bigr)\Bigr] + u \Bigl(1 - \frac{m_t}{n}\Bigr) \\
    &= \Bigl[(1-u)^2 \frac{\bar{q}-1}{\bar{q}} + (1-u)u + u\Bigr]\Bigl(1 - \frac{m_t}{n}\Bigr) \\
    &\dfeq \nu \Bigl(1 - \frac{m_t}{n}\Bigr)
  \end{aligned}
\end{equation}
where $\nu$ depends on $u$ and can be written, after some manipulation, as
\begin{equation}
  \label{nudef}
  \nu \dfeq - \frac{1}{\bar{q}} u^2 + \frac{2}{\bar{q}}u + \frac{\bar{q}-1}{\bar{q}}
\end{equation}

From (\ref{putsch}), we determine that the average increment of the number
of nodes visited in the random walk is
\begin{equation}
  m_{t+1}-m_{t} = \nu \Bigl(1 - \frac{m_t}{n}\Bigr)
\end{equation}
Taking infinitesimal time steps, we transform this into a differential
equation
\begin{equation}
  \frac{d}{d t} m(t) = \nu \Bigl(1 - \frac{m(t)}{n}\Bigr)
\end{equation}
which, assuming that we begin the walk from a single node ($m(0)=1$) gives
\begin{equation}
  m(t) = n(1-e^{-\nu t/n}) + e^{-\nu t/n}
\end{equation}
Note that this is a dynamic very similar to that of the probability
spread (the two are, of course, related, so this should not come as a
surprise). The visit is faster is $\nu$ is large and, deriving
(\ref{nudef}) with respect to $u$, we see that the maximum of $\nu$ is
for $u=1$. That is, the completely random visit is the fastest. This
is compatible with the results already obtained in \cite{riascos:12}.

Of course, in practical situations, this doesn't mean that the random
jumps on a social network are the best search strategy: one must
consider that nearby site often have related information, so a
strategy that stays around as long as there are resources and then
jumps to a random node (that is, something akin to ARS, but with
random jumps in lieu of L\'evy flights) might be optimal, This,
however, goes beyond the scope of this note.

\section{Transit Time}
We are now interested in determining the \dqt{speed} at which the walk
moves, that is, the average time that it takes, if we start from a
node $i$, to reach a node $j$. We begin by rewriting the master
equation for continuous time. Note that the values $p_{ij}(t)$ are
discrete probabilities on the nodes but probability densities if we
consider time. That is, if we start from node $i$, the probability of
arriving at node $j$ in the time interval $[t,t+dt]$ is
$p_{ij}(t)dt$. With this in mind, we can write the master equation as
\begin{equation}
  p_{ij}(t) = \delta_{ij}\delta(0) + \int_0^t p_{jj}(t-\tau)F_{ij}(\tau)d\tau
\end{equation}
where $\delta_{ij}$ is the Kronecker delta, $\delta(0)$ the Dirac's,
and $F_{ij}(t)$ is the probability density of first passage,
that is $F_{ij}(t)dt$ is the probability that, starting from node $i$,
we go for the first time through node $j$ in the interval
$[t,t+dt]$. Taking the Laplace transform of this relation, we have
\begin{equation}
  \tilde{p}_{ij}(s) = \delta_{ij} + \tilde{p}_{jj}(s) \tilde{F}_{ij}(s)
\end{equation}
that is
\begin{equation}
  \label{foobar}
  \tilde{F}_{ij}(s) = \frac{\tilde{p}_{ij}(s)-\delta_{ij}}{\tilde{p}_{jj}(s)}
\end{equation}
\begin{figure*}[htbp]
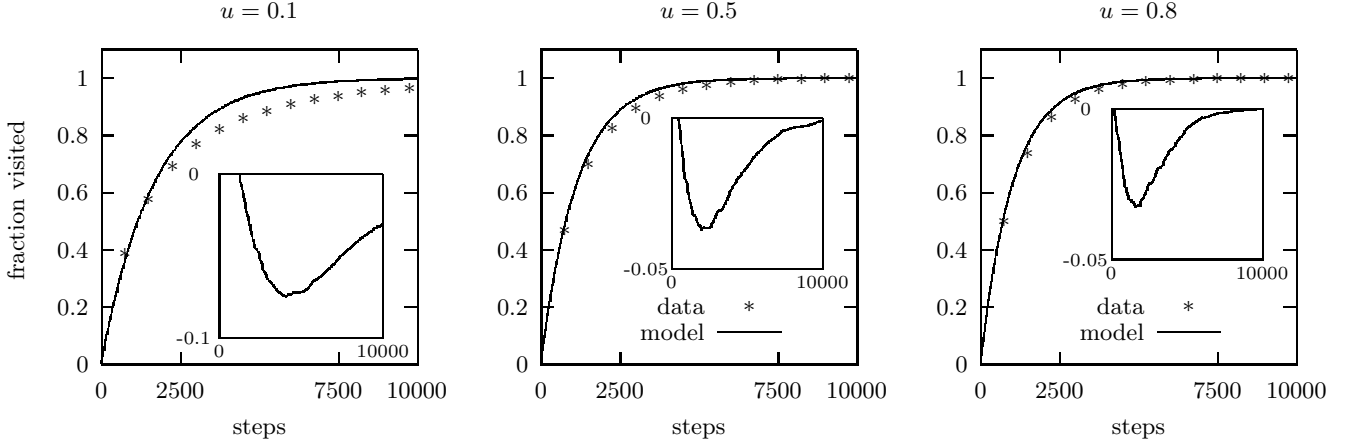

\setlength{\unitlength}{0.240900pt}
\ifx\plotpoint\undefined\newsavebox{\plotpoint}\fi
\sbox{\plotpoint}{\rule[-0.200pt]{0.400pt}{0.400pt}}%

  \caption{The number of new nodes visited during the simulation
    (indicated as \dqt{*}) and the predicted value (continuous curve)
    for $u=0.1,0.5,0.8$. The insets show the error (simulated minus
    predicted). The variance of the data is very small, and was not
    indicated not to clutter the figures.}
  \label{occupancy}
\end{figure*}
This quantity is related to the expected time that we are looking
for. The expected time to go from $i$ to $j$ is the average of the
first passage, that is
\begin{equation}
  \langle T_{ij} \rangle = \int_0^{\infty}\!\!\!\!t F_{ij}(t)\,\,dt = - \tilde{F}_{ij}^\prime (0)
\end{equation}
In order to compute the derivative, we follow a procedure similar to
that in \cite{riascos:12}, translating it to continuous
time. We have
\begin{equation}
    \begin{aligned}
    \tilde{p}_{ij}(s) &= \int e^{-st} p_{ij}(t) dt \\
        &= p_j^* \int e^{-st} + \int e^{-st} \bigl[p_{ij}(t) - p_j^*\bigr] dt \\
        &\dfeq p_j^* \int e^{-st} + \int e^{-st} \hat{p}_{ij}(t) dt
    \end{aligned}   
\end{equation}
where we have defined $\hat{p}_{ij}(t)\dfeq{p_{ij}(t)}-p_j^*$. Computing
the first integral and expressing the exponential in the second as a
series, we have
\begin{equation}
    \begin{aligned}
      \tilde{p}_{ij}(s) &=  \frac{p_j^*}{s} + \sum_{n=0}^\infty \frac{(-1)^n}{n!} s^n \int t^n \hat{p}_{ij}(t) dt \\
  &\dfeq \frac{p_j^*}{s} + \sum_{n=0}^\infty \frac{(-1)^n}{n!} s^n R_{ij}^n \\
  &\dfeq \frac{p_j^*}{s} Q_{ij}(s)
  \end{aligned}
\end{equation}
where we have defined the moments
\begin{equation}
  R_{ij}^n = \int_0^\infty\!\!\!t^n \hat{p}_{ij}(t) dt
\end{equation}
and the shortcut
\begin{equation}
  Q_{ij}(s) = \sum_{n=0}^\infty \frac{(-1)^n}{n!} s^n R_{ij}^n
\end{equation}
Note that $Q_{ij}(0)=R_{ij}^{(0)}$. We can now rewrite (\ref{foobar}) as
\begin{equation}
    \begin{aligned}
      \tilde{F}_{ij}(s) &= \frac{1}{\displaystyle \frac{p_j^*}{s} + Q_{jj}(s)} \Bigl[ \frac{p_j^*}{s} + Q_{ij}(s) - \delta_{ij} \Bigr] \\
     &= \frac{p_j^* + s(Q_{ij}(s)-\delta_{ij})}{p+j^*+sQ_{jj}(s)}
     \end{aligned}
\end{equation}
It derivative is
\begin{equation}
    \begin{aligned}
  \tilde{F}_{ij}^\prime(s) &= \frac{Q_{ij}(s)-\delta_{ij}+sQ_{ij}^\prime(s)}{p_j^*+sQ_{jj}(s)} \\
    &-
  \frac{(Q_{jj}(s)+sQ_{jj}^\prime(s))\bigl[p_j^*+s(Q_{ij}(s)-\delta_{ij})\bigr]}{(p_j^*+sQ_{jj}(s))^2}
  \end{aligned} 
\end{equation}
Computing it in $s=0$ we obtain
\begin{equation}
  \langle T_{ij} \rangle = \frac{1}{p_j^*}\bigl[ R_{jj}^{(0)} - R_{ij}^{(0)} + \delta_{ij} \bigr]
\end{equation}
Taking the average over all pair of nodes, we have
\begin{equation}
  \label{tave}
  \langle T \rangle = \sum_{i \ne j} \langle T_{ij} \rangle p_j^* = \sum_i R_{ii}^{(0)}
\end{equation}

In order to compute the values $R_{ii}^{(0)}$ we need to express
$p_{ij}(t)$. From (\ref{solution}) we have
\begin{equation}
  p_{ij}(t) = \frac{u}{N} \bra{e_j} (\mtx{I}-e^{-\mtx{Q}t})\mtx{Q}^{-1}\ket{1_n} + \bra{e_j} e^{-\mtx{Q}t} \ket{e_i}
\end{equation}
that is
\begin{equation}
  \begin{aligned}
    \hat{p}_{ij}(t) &= \bra{e_j} e^{-\mtx{Q}t} \ket{e_i} - \frac{u}{N} \bra{e_j} e^{-\mtx{Q}t}\mtx{Q}^{-1}\ket{1_n} \\
    & = \bra{e_j} \mtx{T} e^{-\mtx{L}t} \mtx{T}^{-1} \ket{e_i} - 
    \frac{u}{N} \bra{e_j} \mtx{T} e^{-\mtx{L}t}\mtx{L}^{-1}\mtx{T}^{-1}\ket{1_n} 
  \end{aligned}
\end{equation}
Using the definitions (\ref{tdef}), (\ref{t1def}), and (\ref{ttdef}),
we can expand this equation as
\begin{equation}
  \hat{p}_{ij}(t) = \sum_k e^{-\mu_kt} \phi_{k,j} \psi_{k,i} - \frac{u}{N} \frac{e^{-\mu_kt}}{\mu_k} \phi_{k,j} b_k
\end{equation}
That is
\begin{equation}
    \begin{aligned}
  R_{ij}^{(0)} &= \int_0^\infty\!\!\!\hat{p}_{ij}(t) = 
  \sum_k \frac{1}{\mu_k} \Bigl[ \phi_{k,j}\psi_{k,i} - \frac{u}{N\mu_k}\phi_{k,j}b_k \Bigr] \\
  &= \sum_k \frac{1}{\mu_k} \psi_{k,i} \psi_{k,i} - \frac{u}{N\mu_k^2} \sum_k \phi_{k,i} b_k
  \end{aligned}
\end{equation}
Therefore, from (\ref{tave}) we have
\begin{equation}
    \begin{aligned}
  \langle T \rangle &= \sum_i \sum_k \frac{1}{\mu_k} \phi_{k,i}\psi_{k,i} - 
  \sum_i \sum_k \frac{u}{N\mu_k^2} \phi_{k,i} b_k \\
  &= 
  \sum_k \frac{1}{\mu_k} \sum_i  \phi_{k,i}\psi_{k,i} - 
  \sum_k \frac{u}{N\mu_k^2} b_k \sum_i \phi_{k,i}
  \end{aligned}
\end{equation}
The first summation over $i$ is equal to $1$ because of
(\ref{deltaeq}), while of the second only the first term remains, the
others being zero because of lemma \ref{sum0}. Considering that $\mu_1=u$ and 
applying (\ref{sums}), we have
\begin{equation}
  \begin{aligned}
    \langle T \rangle &= \sum_k \frac{1}{\mu_k} - 
    \frac{u}{N\mu_1^2} \sum_i \phi_{1,i} \sum_i \psi_{k,1} \\
    &= \sum_k \frac{1}{\mu_k} - \frac{1}{\mu_1} \\
    &= \sum_{k=2}^{N} \frac{1}{\mu_k} \\
    &= \sum_{k=2}^{N} \frac{1}{1-(1-u)\lambda_k} 
  \end{aligned}
\end{equation}
This result is directly comparable with that of \cite{riascos:12}. If
$u=0$, the result is the same as that of \cite{riascos:12} when
$\alpha\rightarrow\infty$. Note however that in \cite{riascos:12} the
dependence on the parameter $\alpha$ is hidden in the dependence of
$\lambda_k$ while, in our case, $\lambda_k$ is a constant depending
only on the structure of the graph, and the dependence on $u$ is
explicit.

Note that, just as in the case of the dynamics, the result does not
depend on $\mu_1$, but only on the eigenvalues $\mu_2,\ldots,\mu_n$
which describe the structure of the graph.
\section{Simulations}
We consider a Random graph created using the algorithm of Leskovec
\emph{et al.} \cite{leskovec:08} with 1,000 nodes. On the graph, we
execute 200 random walks starting at a randomly selected node and
terminating when all the nodes have been visited. Using these walks we
estimate the number of new nodes visited as a function of the number
of steps, as well as the average time to move from one node to another
(the first node of the estimation is, for each walk always the origin
of the walk).

Figure~\ref{occupancy} shows the number of nodes visited and the
predicted value for $u=0.1,0.5,0.8$.
The variance of teh data was very small, and it is not shown to avoid
clutter in the figure. The model slightly overestimates the number of
nodes, and the error is more pronounced for small $u$ (viz.\ for walks
closer to the standard random walk.

Figure~\ref{transit} shows the average transit time between nodes as a
function of $u$.
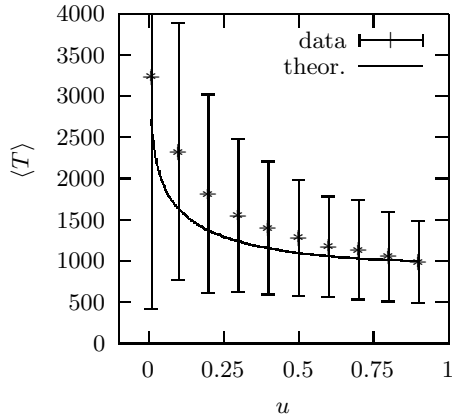
\begin{figure}
  \begin{center}
\setlength{\unitlength}{0.240900pt}
\ifx\plotpoint\undefined\newsavebox{\plotpoint}\fi
\sbox{\plotpoint}{\rule[-0.200pt]{0.400pt}{0.400pt}}%
\begin{picture}(750,750)(0,0)
\sbox{\plotpoint}{\rule[-0.200pt]{0.400pt}{0.400pt}}%
\put(171.0,161.0){\rule[-0.200pt]{4.818pt}{0.400pt}}
\put(151,161){\makebox(0,0)[r]{$0$}}
\put(669.0,161.0){\rule[-0.200pt]{4.818pt}{0.400pt}}
\put(171.0,226.0){\rule[-0.200pt]{4.818pt}{0.400pt}}
\put(151,226){\makebox(0,0)[r]{$500$}}
\put(669.0,226.0){\rule[-0.200pt]{4.818pt}{0.400pt}}
\put(171.0,291.0){\rule[-0.200pt]{4.818pt}{0.400pt}}
\put(151,291){\makebox(0,0)[r]{$1000$}}
\put(669.0,291.0){\rule[-0.200pt]{4.818pt}{0.400pt}}
\put(171.0,355.0){\rule[-0.200pt]{4.818pt}{0.400pt}}
\put(151,355){\makebox(0,0)[r]{$1500$}}
\put(669.0,355.0){\rule[-0.200pt]{4.818pt}{0.400pt}}
\put(171.0,420.0){\rule[-0.200pt]{4.818pt}{0.400pt}}
\put(151,420){\makebox(0,0)[r]{$2000$}}
\put(669.0,420.0){\rule[-0.200pt]{4.818pt}{0.400pt}}
\put(171.0,485.0){\rule[-0.200pt]{4.818pt}{0.400pt}}
\put(151,485){\makebox(0,0)[r]{$2500$}}
\put(669.0,485.0){\rule[-0.200pt]{4.818pt}{0.400pt}}
\put(171.0,550.0){\rule[-0.200pt]{4.818pt}{0.400pt}}
\put(151,550){\makebox(0,0)[r]{$3000$}}
\put(669.0,550.0){\rule[-0.200pt]{4.818pt}{0.400pt}}
\put(171.0,614.0){\rule[-0.200pt]{4.818pt}{0.400pt}}
\put(151,614){\makebox(0,0)[r]{$3500$}}
\put(669.0,614.0){\rule[-0.200pt]{4.818pt}{0.400pt}}
\put(171.0,679.0){\rule[-0.200pt]{4.818pt}{0.400pt}}
\put(151,679){\makebox(0,0)[r]{$4000$}}
\put(669.0,679.0){\rule[-0.200pt]{4.818pt}{0.400pt}}
\put(218.0,161.0){\rule[-0.200pt]{0.400pt}{4.818pt}}
\put(218,120){\makebox(0,0){$0$}}
\put(218.0,659.0){\rule[-0.200pt]{0.400pt}{4.818pt}}
\put(336.0,161.0){\rule[-0.200pt]{0.400pt}{4.818pt}}
\put(336,120){\makebox(0,0){$0.25$}}
\put(336.0,659.0){\rule[-0.200pt]{0.400pt}{4.818pt}}
\put(454.0,161.0){\rule[-0.200pt]{0.400pt}{4.818pt}}
\put(454,120){\makebox(0,0){$0.5$}}
\put(454.0,659.0){\rule[-0.200pt]{0.400pt}{4.818pt}}
\put(571.0,161.0){\rule[-0.200pt]{0.400pt}{4.818pt}}
\put(571,120){\makebox(0,0){$0.75$}}
\put(571.0,659.0){\rule[-0.200pt]{0.400pt}{4.818pt}}
\put(689.0,161.0){\rule[-0.200pt]{0.400pt}{4.818pt}}
\put(689,120){\makebox(0,0){$1$}}
\put(689.0,659.0){\rule[-0.200pt]{0.400pt}{4.818pt}}
\put(171.0,161.0){\rule[-0.200pt]{0.400pt}{124.786pt}}
\put(171.0,161.0){\rule[-0.200pt]{124.786pt}{0.400pt}}
\put(689.0,161.0){\rule[-0.200pt]{0.400pt}{124.786pt}}
\put(171.0,679.0){\rule[-0.200pt]{124.786pt}{0.400pt}}
\put(30,420){\makebox(0,0){\begin{rotate}{90}$\langle{T}\rangle$\end{rotate}}}
\put(430,59){\makebox(0,0){$u$}}
\put(529,638){\makebox(0,0)[r]{data}}
\put(549.0,638.0){\rule[-0.200pt]{24.090pt}{0.400pt}}
\put(549.0,628.0){\rule[-0.200pt]{0.400pt}{4.818pt}}
\put(649.0,628.0){\rule[-0.200pt]{0.400pt}{4.818pt}}
\put(223.0,215.0){\rule[-0.200pt]{0.400pt}{111.778pt}}
\put(213.0,215.0){\rule[-0.200pt]{4.818pt}{0.400pt}}
\put(265.0,261.0){\rule[-0.200pt]{0.400pt}{97.083pt}}
\put(255.0,261.0){\rule[-0.200pt]{4.818pt}{0.400pt}}
\put(255.0,664.0){\rule[-0.200pt]{4.818pt}{0.400pt}}
\put(312.0,240.0){\rule[-0.200pt]{0.400pt}{75.161pt}}
\put(302.0,240.0){\rule[-0.200pt]{4.818pt}{0.400pt}}
\put(302.0,552.0){\rule[-0.200pt]{4.818pt}{0.400pt}}
\put(359.0,242.0){\rule[-0.200pt]{0.400pt}{57.816pt}}
\put(349.0,242.0){\rule[-0.200pt]{4.818pt}{0.400pt}}
\put(349.0,482.0){\rule[-0.200pt]{4.818pt}{0.400pt}}
\put(406.0,238.0){\rule[-0.200pt]{0.400pt}{50.348pt}}
\put(396.0,238.0){\rule[-0.200pt]{4.818pt}{0.400pt}}
\put(396.0,447.0){\rule[-0.200pt]{4.818pt}{0.400pt}}
\put(454.0,236.0){\rule[-0.200pt]{0.400pt}{43.844pt}}
\put(444.0,236.0){\rule[-0.200pt]{4.818pt}{0.400pt}}
\put(444.0,418.0){\rule[-0.200pt]{4.818pt}{0.400pt}}
\put(501.0,234.0){\rule[-0.200pt]{0.400pt}{38.062pt}}
\put(491.0,234.0){\rule[-0.200pt]{4.818pt}{0.400pt}}
\put(491.0,392.0){\rule[-0.200pt]{4.818pt}{0.400pt}}
\put(548.0,230.0){\rule[-0.200pt]{0.400pt}{37.580pt}}
\put(538.0,230.0){\rule[-0.200pt]{4.818pt}{0.400pt}}
\put(538.0,386.0){\rule[-0.200pt]{4.818pt}{0.400pt}}
\put(595.0,227.0){\rule[-0.200pt]{0.400pt}{33.967pt}}
\put(585.0,227.0){\rule[-0.200pt]{4.818pt}{0.400pt}}
\put(585.0,368.0){\rule[-0.200pt]{4.818pt}{0.400pt}}
\put(642.0,224.0){\rule[-0.200pt]{0.400pt}{31.076pt}}
\put(632.0,224.0){\rule[-0.200pt]{4.818pt}{0.400pt}}
\put(223,580){\makebox(0,0){$+$}}
\put(265,462){\makebox(0,0){$+$}}
\put(312,396){\makebox(0,0){$+$}}
\put(359,362){\makebox(0,0){$+$}}
\put(406,342){\makebox(0,0){$+$}}
\put(454,327){\makebox(0,0){$+$}}
\put(501,313){\makebox(0,0){$+$}}
\put(548,308){\makebox(0,0){$+$}}
\put(595,298){\makebox(0,0){$+$}}
\put(642,289){\makebox(0,0){$+$}}
\put(599,638){\makebox(0,0){$+$}}
\put(632.0,353.0){\rule[-0.200pt]{4.818pt}{0.400pt}}
\put(223,580){\makebox(0,0){$\ast$}}
\put(265,462){\makebox(0,0){$\ast$}}
\put(312,396){\makebox(0,0){$\ast$}}
\put(359,362){\makebox(0,0){$\ast$}}
\put(406,342){\makebox(0,0){$\ast$}}
\put(454,327){\makebox(0,0){$\ast$}}
\put(501,313){\makebox(0,0){$\ast$}}
\put(548,308){\makebox(0,0){$\ast$}}
\put(595,298){\makebox(0,0){$\ast$}}
\put(642,289){\makebox(0,0){$\ast$}}
\put(529,597){\makebox(0,0)[r]{theor.}}
\put(549.0,597.0){\rule[-0.200pt]{24.090pt}{0.400pt}}
\put(223,513){\usebox{\plotpoint}}
\multiput(223.60,495.15)(0.468,-6.038){5}{\rule{0.113pt}{4.300pt}}
\multiput(222.17,504.08)(4.000,-33.075){2}{\rule{0.400pt}{2.150pt}}
\multiput(227.60,460.21)(0.468,-3.552){5}{\rule{0.113pt}{2.600pt}}
\multiput(226.17,465.60)(4.000,-19.604){2}{\rule{0.400pt}{1.300pt}}
\multiput(231.59,439.94)(0.477,-1.823){7}{\rule{0.115pt}{1.460pt}}
\multiput(230.17,442.97)(5.000,-13.970){2}{\rule{0.400pt}{0.730pt}}
\multiput(236.60,423.19)(0.468,-1.797){5}{\rule{0.113pt}{1.400pt}}
\multiput(235.17,426.09)(4.000,-10.094){2}{\rule{0.400pt}{0.700pt}}
\multiput(240.60,411.43)(0.468,-1.358){5}{\rule{0.113pt}{1.100pt}}
\multiput(239.17,413.72)(4.000,-7.717){2}{\rule{0.400pt}{0.550pt}}
\multiput(244.60,401.85)(0.468,-1.212){5}{\rule{0.113pt}{1.000pt}}
\multiput(243.17,403.92)(4.000,-6.924){2}{\rule{0.400pt}{0.500pt}}
\multiput(248.60,393.68)(0.468,-0.920){5}{\rule{0.113pt}{0.800pt}}
\multiput(247.17,395.34)(4.000,-5.340){2}{\rule{0.400pt}{0.400pt}}
\multiput(252.59,387.59)(0.477,-0.599){7}{\rule{0.115pt}{0.580pt}}
\multiput(251.17,388.80)(5.000,-4.796){2}{\rule{0.400pt}{0.290pt}}
\multiput(257.60,381.09)(0.468,-0.774){5}{\rule{0.113pt}{0.700pt}}
\multiput(256.17,382.55)(4.000,-4.547){2}{\rule{0.400pt}{0.350pt}}
\multiput(261.60,375.51)(0.468,-0.627){5}{\rule{0.113pt}{0.600pt}}
\multiput(260.17,376.75)(4.000,-3.755){2}{\rule{0.400pt}{0.300pt}}
\multiput(265.00,371.94)(0.481,-0.468){5}{\rule{0.500pt}{0.113pt}}
\multiput(265.00,372.17)(2.962,-4.000){2}{\rule{0.250pt}{0.400pt}}
\multiput(269.00,367.94)(0.627,-0.468){5}{\rule{0.600pt}{0.113pt}}
\multiput(269.00,368.17)(3.755,-4.000){2}{\rule{0.300pt}{0.400pt}}
\multiput(274.00,363.94)(0.481,-0.468){5}{\rule{0.500pt}{0.113pt}}
\multiput(274.00,364.17)(2.962,-4.000){2}{\rule{0.250pt}{0.400pt}}
\multiput(278.00,359.94)(0.481,-0.468){5}{\rule{0.500pt}{0.113pt}}
\multiput(278.00,360.17)(2.962,-4.000){2}{\rule{0.250pt}{0.400pt}}
\multiput(282.00,355.95)(0.685,-0.447){3}{\rule{0.633pt}{0.108pt}}
\multiput(282.00,356.17)(2.685,-3.000){2}{\rule{0.317pt}{0.400pt}}
\multiput(286.00,352.95)(0.909,-0.447){3}{\rule{0.767pt}{0.108pt}}
\multiput(286.00,353.17)(3.409,-3.000){2}{\rule{0.383pt}{0.400pt}}
\multiput(291.00,349.95)(0.685,-0.447){3}{\rule{0.633pt}{0.108pt}}
\multiput(291.00,350.17)(2.685,-3.000){2}{\rule{0.317pt}{0.400pt}}
\put(295,346.17){\rule{0.900pt}{0.400pt}}
\multiput(295.00,347.17)(2.132,-2.000){2}{\rule{0.450pt}{0.400pt}}
\put(299,344.17){\rule{0.900pt}{0.400pt}}
\multiput(299.00,345.17)(2.132,-2.000){2}{\rule{0.450pt}{0.400pt}}
\multiput(303.00,342.95)(0.685,-0.447){3}{\rule{0.633pt}{0.108pt}}
\multiput(303.00,343.17)(2.685,-3.000){2}{\rule{0.317pt}{0.400pt}}
\put(307,339.17){\rule{1.100pt}{0.400pt}}
\multiput(307.00,340.17)(2.717,-2.000){2}{\rule{0.550pt}{0.400pt}}
\put(312,337.17){\rule{0.900pt}{0.400pt}}
\multiput(312.00,338.17)(2.132,-2.000){2}{\rule{0.450pt}{0.400pt}}
\put(316,335.17){\rule{0.900pt}{0.400pt}}
\multiput(316.00,336.17)(2.132,-2.000){2}{\rule{0.450pt}{0.400pt}}
\put(320,333.17){\rule{0.900pt}{0.400pt}}
\multiput(320.00,334.17)(2.132,-2.000){2}{\rule{0.450pt}{0.400pt}}
\put(324,331.67){\rule{1.204pt}{0.400pt}}
\multiput(324.00,332.17)(2.500,-1.000){2}{\rule{0.602pt}{0.400pt}}
\put(329,330.17){\rule{0.900pt}{0.400pt}}
\multiput(329.00,331.17)(2.132,-2.000){2}{\rule{0.450pt}{0.400pt}}
\put(333,328.17){\rule{0.900pt}{0.400pt}}
\multiput(333.00,329.17)(2.132,-2.000){2}{\rule{0.450pt}{0.400pt}}
\put(337,326.67){\rule{0.964pt}{0.400pt}}
\multiput(337.00,327.17)(2.000,-1.000){2}{\rule{0.482pt}{0.400pt}}
\put(341,325.67){\rule{1.204pt}{0.400pt}}
\multiput(341.00,326.17)(2.500,-1.000){2}{\rule{0.602pt}{0.400pt}}
\put(346,324.17){\rule{0.900pt}{0.400pt}}
\multiput(346.00,325.17)(2.132,-2.000){2}{\rule{0.450pt}{0.400pt}}
\put(350,322.67){\rule{0.964pt}{0.400pt}}
\multiput(350.00,323.17)(2.000,-1.000){2}{\rule{0.482pt}{0.400pt}}
\put(354,321.67){\rule{0.964pt}{0.400pt}}
\multiput(354.00,322.17)(2.000,-1.000){2}{\rule{0.482pt}{0.400pt}}
\put(358,320.67){\rule{1.204pt}{0.400pt}}
\multiput(358.00,321.17)(2.500,-1.000){2}{\rule{0.602pt}{0.400pt}}
\put(363,319.17){\rule{0.900pt}{0.400pt}}
\multiput(363.00,320.17)(2.132,-2.000){2}{\rule{0.450pt}{0.400pt}}
\put(367,317.67){\rule{0.964pt}{0.400pt}}
\multiput(367.00,318.17)(2.000,-1.000){2}{\rule{0.482pt}{0.400pt}}
\put(371,316.67){\rule{0.964pt}{0.400pt}}
\multiput(371.00,317.17)(2.000,-1.000){2}{\rule{0.482pt}{0.400pt}}
\put(375,315.67){\rule{0.964pt}{0.400pt}}
\multiput(375.00,316.17)(2.000,-1.000){2}{\rule{0.482pt}{0.400pt}}
\put(379,314.67){\rule{1.204pt}{0.400pt}}
\multiput(379.00,315.17)(2.500,-1.000){2}{\rule{0.602pt}{0.400pt}}
\put(384,313.67){\rule{0.964pt}{0.400pt}}
\multiput(384.00,314.17)(2.000,-1.000){2}{\rule{0.482pt}{0.400pt}}
\put(388,312.67){\rule{0.964pt}{0.400pt}}
\multiput(388.00,313.17)(2.000,-1.000){2}{\rule{0.482pt}{0.400pt}}
\put(392,311.67){\rule{0.964pt}{0.400pt}}
\multiput(392.00,312.17)(2.000,-1.000){2}{\rule{0.482pt}{0.400pt}}
\put(401,310.67){\rule{0.964pt}{0.400pt}}
\multiput(401.00,311.17)(2.000,-1.000){2}{\rule{0.482pt}{0.400pt}}
\put(405,309.67){\rule{0.964pt}{0.400pt}}
\multiput(405.00,310.17)(2.000,-1.000){2}{\rule{0.482pt}{0.400pt}}
\put(409,308.67){\rule{0.964pt}{0.400pt}}
\multiput(409.00,309.17)(2.000,-1.000){2}{\rule{0.482pt}{0.400pt}}
\put(396.0,312.0){\rule[-0.200pt]{1.204pt}{0.400pt}}
\put(418,307.67){\rule{0.964pt}{0.400pt}}
\multiput(418.00,308.17)(2.000,-1.000){2}{\rule{0.482pt}{0.400pt}}
\put(422,306.67){\rule{0.964pt}{0.400pt}}
\multiput(422.00,307.17)(2.000,-1.000){2}{\rule{0.482pt}{0.400pt}}
\put(426,305.67){\rule{0.964pt}{0.400pt}}
\multiput(426.00,306.17)(2.000,-1.000){2}{\rule{0.482pt}{0.400pt}}
\put(413.0,309.0){\rule[-0.200pt]{1.204pt}{0.400pt}}
\put(434,304.67){\rule{1.204pt}{0.400pt}}
\multiput(434.00,305.17)(2.500,-1.000){2}{\rule{0.602pt}{0.400pt}}
\put(430.0,306.0){\rule[-0.200pt]{0.964pt}{0.400pt}}
\put(443,303.67){\rule{0.964pt}{0.400pt}}
\multiput(443.00,304.17)(2.000,-1.000){2}{\rule{0.482pt}{0.400pt}}
\put(447,302.67){\rule{0.964pt}{0.400pt}}
\multiput(447.00,303.17)(2.000,-1.000){2}{\rule{0.482pt}{0.400pt}}
\put(439.0,305.0){\rule[-0.200pt]{0.964pt}{0.400pt}}
\put(456,301.67){\rule{0.964pt}{0.400pt}}
\multiput(456.00,302.17)(2.000,-1.000){2}{\rule{0.482pt}{0.400pt}}
\put(451.0,303.0){\rule[-0.200pt]{1.204pt}{0.400pt}}
\put(464,300.67){\rule{0.964pt}{0.400pt}}
\multiput(464.00,301.17)(2.000,-1.000){2}{\rule{0.482pt}{0.400pt}}
\put(460.0,302.0){\rule[-0.200pt]{0.964pt}{0.400pt}}
\put(473,299.67){\rule{0.964pt}{0.400pt}}
\multiput(473.00,300.17)(2.000,-1.000){2}{\rule{0.482pt}{0.400pt}}
\put(468.0,301.0){\rule[-0.200pt]{1.204pt}{0.400pt}}
\put(485,298.67){\rule{1.204pt}{0.400pt}}
\multiput(485.00,299.17)(2.500,-1.000){2}{\rule{0.602pt}{0.400pt}}
\put(477.0,300.0){\rule[-0.200pt]{1.927pt}{0.400pt}}
\put(494,297.67){\rule{0.964pt}{0.400pt}}
\multiput(494.00,298.17)(2.000,-1.000){2}{\rule{0.482pt}{0.400pt}}
\put(490.0,299.0){\rule[-0.200pt]{0.964pt}{0.400pt}}
\put(506,296.67){\rule{1.204pt}{0.400pt}}
\multiput(506.00,297.17)(2.500,-1.000){2}{\rule{0.602pt}{0.400pt}}
\put(498.0,298.0){\rule[-0.200pt]{1.927pt}{0.400pt}}
\put(515,295.67){\rule{0.964pt}{0.400pt}}
\multiput(515.00,296.17)(2.000,-1.000){2}{\rule{0.482pt}{0.400pt}}
\put(511.0,297.0){\rule[-0.200pt]{0.964pt}{0.400pt}}
\put(532,294.67){\rule{0.964pt}{0.400pt}}
\multiput(532.00,295.17)(2.000,-1.000){2}{\rule{0.482pt}{0.400pt}}
\put(519.0,296.0){\rule[-0.200pt]{3.132pt}{0.400pt}}
\put(545,293.67){\rule{0.964pt}{0.400pt}}
\multiput(545.00,294.17)(2.000,-1.000){2}{\rule{0.482pt}{0.400pt}}
\put(536.0,295.0){\rule[-0.200pt]{2.168pt}{0.400pt}}
\put(561,292.67){\rule{1.204pt}{0.400pt}}
\multiput(561.00,293.17)(2.500,-1.000){2}{\rule{0.602pt}{0.400pt}}
\put(549.0,294.0){\rule[-0.200pt]{2.891pt}{0.400pt}}
\put(583,291.67){\rule{0.964pt}{0.400pt}}
\multiput(583.00,292.17)(2.000,-1.000){2}{\rule{0.482pt}{0.400pt}}
\put(566.0,293.0){\rule[-0.200pt]{4.095pt}{0.400pt}}
\put(612,290.67){\rule{1.204pt}{0.400pt}}
\multiput(612.00,291.17)(2.500,-1.000){2}{\rule{0.602pt}{0.400pt}}
\put(587.0,292.0){\rule[-0.200pt]{6.022pt}{0.400pt}}
\put(617.0,291.0){\rule[-0.200pt]{6.022pt}{0.400pt}}
\put(171.0,161.0){\rule[-0.200pt]{0.400pt}{124.786pt}}
\put(171.0,161.0){\rule[-0.200pt]{124.786pt}{0.400pt}}
\put(689.0,161.0){\rule[-0.200pt]{0.400pt}{124.786pt}}
\put(171.0,679.0){\rule[-0.200pt]{124.786pt}{0.400pt}}
\end{picture}
  \end{center}
  \caption{The average time to transit between nodes as a function of
    $u$: simulation data vs.\ model predictions.}
  \label{transit}
\end{figure}
In this case, and coherently with the overestimation of
Figure~\ref{occupancy}, the model underestimates the actual time,
although it always remains within the standard deviation of the data,
and the error is especially pronounced for small $u$. The eigenvalues
of the $1000\times{1000}$ matrix used for the model were estimated
using an iterative method \cite{press:86}, and their accuracy was not
verified.

\section{Conclusions}
We have used a continuous approximation and spectral analysis to
determine the diffusion characteristics of Page Rank random walks. In
this, we have found some parallel with L\'evy Random walks, which also
give raise to superdiffusion \cite{riascos:14}. The usefulness of this
result is apparent especially in modeling searches: we now have a
plausible model of people's individual search behavior that can
provably generate walks whose statistical characteristics match those
of macroscopic search behaviors, such as ARS.

\appendix

\section{Eigenvalue $\lambda=-1$} 
\label{bipartite}
Let $\mtx{W}$ be the random walk transition matrix of a graph, which entails
\begin{equation}
  \label{positive}
  w_{ij} \ge 0
\end{equation}
as well as (\ref{coo}).
We want to prove the following:

\begin{theorem}
  $\mtx{W}$ has an eigenvalue $\lambda=-1$ iff the graph is bipartite%
  \footnote{Note that we determine the eigenvalue of $\mtx{W}$ while
    the equations that we use in the paper use $\mtx{W}'$. The two,
    however, have the same eigenvalues---but not the same
    eigenvectors, of course.}%
\end{theorem}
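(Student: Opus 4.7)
The plan is to prove the two directions separately, with the forward direction (bipartite $\Rightarrow\lambda=-1$) being routine and the backward direction requiring a maximum--modulus style propagation argument.

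For the forward direction, suppose $G$ is bipartite with parts $V_1$ and $V_2$. I would exhibit an explicit eigenvector: define $\ket{v}$ by $v_i=+1$ if $i\in V_1$ and $v_i=-1$ if $i\in V_2$. For any $i\in V_1$ every neighbor lies in $V_2$, so
\begin{equation}
  (\mtx{W}\ket{v})_i = \frac{1}{d_i}\sum_{j:(i,j)\in E} v_j = \frac{1}{d_i}\cdot d_i\cdot(-1) = -v_i,
\end{equation}
and symmetrically for $i\in V_2$. Hence $\mtx{W}\ket{v}=-\ket{v}$.

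For the backward direction, assume $\mtx{W}\ket{v}=-\ket{v}$ with $\ket{v}\ne 0$, and (as assumed throughout the paper) that $G$ is connected. Let $M=\max_i|v_i|>0$ and pick $i^*$ with $|v_{i^*}|=M$; without loss of generality $v_{i^*}=M$ (otherwise replace $\ket{v}$ by $-\ket{v}$). The eigenvalue equation at $i^*$ together with the triangle inequality and $w_{ij}\ge 0$, $\sum_j w_{i^*j}=1$ gives
\begin{equation}
  M = |v_{i^*}| = \Bigl|\sum_{j:(i^*,j)\in E} w_{i^*j} v_j\Bigr| \le \sum_{j:(i^*,j)\in E} w_{i^*j} |v_j| \le M.
\end{equation}
Equality throughout forces $|v_j|=M$ for every neighbor $j$ of $i^*$, and moreover all the $w_{i^*j}v_j$ must have a common sign, which together with $\sum_j w_{i^*j}v_j = -M$ forces $v_j=-M$ for every neighbor of $i^*$. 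The same argument applied at any vertex attaining $\pm M$ shows that if $v_k=\pm M$ then $v_j=\mp M$ for every neighbor $j$ of $k$; in particular the set of vertices where $|v|=M$ is closed under taking neighbors.

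Now I would propagate along paths from $i^*$ using connectedness: a straightforward induction on the shortest-path distance $d(i^*,k)$ shows that every vertex $k$ satisfies $v_k=+M$ if $d(i^*,k)$ is even and $v_k=-M$ if $d(i^*,k)$ is odd. Define
\begin{equation}
  V_1 = \{k : v_k = +M\},\qquad V_2 = \{k : v_k = -M\}.
\end{equation}
Then $V=V_1\cup V_2$ is a disjoint union. If there were an edge $(k,k')$ with both endpoints in $V_1$, the eigenvalue equation at $k$ would read $-M=\frac{1}{d_k}\sum_{j\sim k}v_j$; but one of the summands equals $+M$ and every other summand is at most $M$ in absolute value, so the right-hand side is strictly greater than $-M$, a contradiction. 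Symmetrically no edge lies inside $V_2$, so every edge joins $V_1$ to $V_2$ and $G$ is bipartite.

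The main obstacle is the backward direction: one must extract a genuine $\pm 1$--valued eigenvector from an \emph{a priori} arbitrary eigenvector $\ket{v}$. The key leverage is that $\mtx{W}$ is a nonnegative stochastic matrix, so the triangle inequality applied at a vertex where $|v|$ is maximal is tight and forces the neighbors to be extremal of the opposite sign; once this is established, connectedness does the rest.
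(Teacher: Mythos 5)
Your proposal is correct, and its overall strategy matches the paper's: the forward direction exhibits the same $\pm 1$ sign vector as an explicit eigenvector (you write it vertex-wise, the paper writes it in block form), and the backward direction exploits the fact that $\mtx{W}$ is nonnegative with unit row sums, so that the triangle inequality applied to the eigenvalue equation is tight at an extremal vertex. The difference lies in how the backward direction is organized, and it matters. The paper argues globally: assuming not all components of the eigenvector have the same modulus, it asserts the strict inequality $\sum_j w_{kj}|b_j| < |b_k|\sum_j w_{kj}$ at a vertex $k$ of maximal modulus, derives a contradiction, and then relabels vertices so that the resulting $\pm 1$ pattern forces the off-diagonal block structure of $\mtx{W}$ via the row sums. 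But that strict inequality only holds if $k$ has a \emph{neighbor} of strictly smaller modulus, which does not follow from the mere existence of a smaller component somewhere in the graph; justifying it requires exactly the connectedness-based propagation that you carry out. Your version anchors at a maximum-modulus vertex, shows its neighbors must be extremal with the opposite sign, and propagates this by induction on shortest-path distance, making the connectedness hypothesis explicit --- and it is genuinely needed, since a disjoint union of a $4$-cycle and a triangle has $\lambda = -1$ without being bipartite, so the theorem as stated fails for disconnected graphs. In short, your route is the same in spirit but is tighter where the paper is loose: it closes the gap at the strict-inequality step and surfaces a standing assumption that the paper uses silently.
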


\begin{proof}
  Suppose first the graph is bipartite, with the two sets of nodes
  containing $n$ and $m$ nodes. We can label the nodes so that the
  nodes in the first set are $1,\ldots,n$ and those in the second are
  $n+1,\ldots,m$. This entails that the adjacency matrix $\mtx{A}$
  has structure
  \begin{equation}
    \mtx{A} = 
    \left[
      \begin{array}{cc}
        0 & \mtx{B} \\
        \mtx{B}' & 0
      \end{array}
      \right]
  \end{equation}
  and the walk matrix is
  \begin{equation}
    \label{wstruct}
    \mtx{W} = 
    \left[
      \begin{array}{cc}
        0 & \mtx{W}_1 \\
        \mtx{W}_2 & 0
      \end{array}
      \right]
  \end{equation}
  with $\mtx{W}_1\in{\mathbb{R}}^{n\times{m}}$ and
  $\mtx{W}_2\in{\mathbb{R}}^{m\times{n}}$. From the general properties
  of transition matrix, we have
  \begin{equation}
    \mtx{W}\ket{1_{n+m}} = \ket{1_{n+m}}
  \end{equation}
  From this and the structure of $\mtx{W}$, it follows
  \begin{equation}
    \label{weigen}
    \begin{aligned}
      \mtx{W}_1 \ket{1_m} &= \ket{1_n} \\
      \mtx{W}_2 \ket{1_n} &= \ket{1_m} 
    \end{aligned}
  \end{equation}
  Consider now the vector
  \begin{equation}
    \label{bdef}
    b = \left[
      \begin{array}{c}
        \ket{1_n} \\
        -\ket{1_m}
      \end{array}
      \right]
  \end{equation}
  For om the structure of $\mtx{W}$ and (\ref{weigen}), we have
  \begin{equation}
  \begin{aligned}
    \left[
      \begin{array}{cc}
        0 & \mtx{W}_1 \\
        \mtx{W}_2 & 0
      \end{array}
      \right]
    \,\left[
      \begin{array}{r}
        \ket{1_n} \\
        -\ket{1_m}
      \end{array}
      \right] 
    &=
    \left[
      \begin{array}{r}
        - \mtx{W}_1 \ket{1_m} \\
        \mtx{W}_2 \ket{1_n}
      \end{array}
      \right] \\
    &= 
    \left[
      \begin{array}{r}
        -\ket{1_n} \\
        \ket{1_m}
      \end{array}
      \right]
    = -b
    \end{aligned}
  \end{equation}
  Therefore $b$ is an eigenvector with eigenvalue $-1$.

  \separate

  Suppose now that $\mtx{W}$ has en eigenvalue $-1$. Let $b$ its
  eigenvector, and write
  \begin{equation}
    \ket{c} = \mtx{W}\,\ket{b}
  \end{equation}
  Clearly $c_i=-b_i$. 

  We first show that all the components of $b$ have the same absolute
  value. Suppose, by contradiction
  \begin{equation}
    \label{absval}
    b_h = \min_i\{b_i\} < \max_i \{b_i\} = b_k
  \end{equation}
  Then
  \begin{equation}
    |c_k| = \left| \sum_j w_{kj} b_j \right| \le 
    \sum_j w_{kj} |b_j| \stackrel{(\ddagger)}{<} |b_k| \sum_j w_{kj} = |b_k|
  \end{equation}
  where the inequality $(\ddagger)$ is strict because of (\ref{absval}) and
  the last equality is due to (\ref{coo}). This contradicts
  $c_i=-b_i$.

  Since all the elements have the same absolute value, we can consider
  a vector $b$ composed only of unitary elements, that is
  $b\in\{-1,1\}^{n+m}$, and we can shift rows and columns of $\mtx{W}$
  so that all the 1s are in the first positions, that is, $b$ is as in
  (\ref{bdef}). Consider now the element $c_i$ with $i\le{n}$. Since
  $b$ is an eigenvector for the eigenvalue $-1$, we must have:
  \begin{equation}
    c_i = \sum_j w_{ij} b_j = \sum_{j=1}^n w_{ij} - \sum_{j=n+1}^{n+m} w_{ij} = -b_i = -1
  \end{equation}
  Since $w_{ij}\ge{0}$ and (\ref{sums}) holds, the only way to obtain
  -1 is to have $w_{ij}=0$ for all $j=1,\ldots,n$, so that
  \begin{equation}
    \sum_{j=n+1}^{n+m} w_{ij} = \sum_{j=1}^{n+m} w_{ij} = 1
  \end{equation}
  The same argument applies to all $i=1,\ldots,n$. A similar
  argument for $i=n+1,\ldots,n+m$ shows that in that case we must have 
  $w_{ij}=0$ for $j=n+1,\ldots,n+m$.

  Putting the two together we have the condition
  \begin{equation}
    w_{ij}=0\ \ \ \mbox{if}\ \ (i\le n \mbox{ and } j\le m) \mbox{ or } 
    (i>n \mbox{ and } j>n)
  \end{equation}
  This condition gives us the structure (\ref{wstruct}) for
  $\mtx{W}$, which implies that the graph is bipartite.
\end{proof}

\end{document}